\tikzstyle{mybox} = [draw=black, fill=white,  thick,
\tikzstyle{mybox} = [draw=black, fill=white,  thick,
\newtheorem{thm}{Theorem}
\newtheorem{lemma}{Lemma}
\theoremstyle{definition}
\begin{document}


\title{Fast Approximation and Randomized Algorithms for Diameter}
\author{Sharareh Alipour\thanks{Department of Computer Engineering,
        Sharif University of Technology, Tehran, Iran. {\tt shalipour@ce.sharif.edu}, This work was completed in part when the author visited DIMACS as an REU mentor in 2014}~~and~~Bahman Kalantari\thanks{Department of Computer Science,
        Rutgers University, New Brunswick, New Jersey, USA. {\tt kalantari@cs.rutgers.edu}}
        ~~and~~ Hamid Homapour\thanks{Department of Computer Engineering,
        Sharif University of Technology, Tehran, Iran. {\tt homapour@ce.sharif.edu}}.}
\date{}
\maketitle



\begin{abstract}
We consider approximation of diameter of a set $S$ of $n$ points in dimension $m$. E$\tilde{g}$ecio$\tilde{g}$lu and Kalantari \cite{kal} have shown that given any $p \in S$, by computing its farthest in $S$, say $q$, and in turn the farthest point of $q$, say $q'$, we have ${\rm diam}(S) \leq \sqrt{3}~ d(q,q')$.  Furthermore, iteratively replacing $p$ with an appropriately selected point on the line segment $pq$, in at most $t \leq n$ additional iterations, the constant bound factor is improved  to $c_*=\sqrt{5-2\sqrt{3}} \approx 1.24$. Here we prove when $m=2$, $t=1$.  This suggests in practice a few iterations may produce good  solutions in any dimension.  Here we also propose a randomized version and present large scale computational results with these algorithm for arbitrary $m$. The algorithms outperform many existing algorithms.  On sets of data as large as $1,000,000$ points, the proposed algorithms compute solutions to within an absolute error of $10^{-4}$.
\end{abstract}

{\bf Keywords:} Diameter, Approxiamtion Algorithms, Randomized Algorithms

\section{Introduction}
Given a finite set of points $S$ in $\mathbb{R} ^m$, the diameter of $S$,
denoted by ${\rm diam}(S)$,  is defined as the maximum distance between two points of $S$.  Yao \cite{yao}  has considered the case of $m >2$. For $m=2$ the problem can be solved in $O(n\log n)$ time. Computing the diameter of a point set is a fundamental problem and has a long history.
It can be shown that computing the diameter of $n$ points in $\mathbb{R} ^m$ requires $\Omega(n \log n)$ operations in the algebraic computation-tree model \cite{prep}. The problem becomes much harder in $R^3$. Clarkson and Shor gave a randomized $O(n\log n)$ algorithm\cite{cla}.
Recent attempts to solve the 3-dimensional diameter problem led to $O(n\log^3n)$ \cite{ram1,agr} and $O(n\log^2n)$ deterministic algorithms \cite{ram1,bes}.
Finally Ramos found an optimal $O(n \log n)$ deterministic algorithm\cite{ram2}.
All these algorithms use complex data structures and algorithmic techniques such as 3-dimensional convex hulls, intersection of balls, furthest-point Voronoi diagrams, point location search structures or parametric search.
There are many other papers that focus on this problem, see \cite{fou,agha,chan,mal,her}. The first nontrivial approximation algorithm for this problem for arbitrary $m$  was given in \cite{kal}, approximating the diameter to within a factor of $\sqrt 3$. The operation cost of this algorithm is $O(mn)$. Additionally,  \cite{kal} describes an iterative  algorithm that
in $t \leq n$ iterations, each of cost $O(mn)$,  produces an approximation of ${\rm diam}(S)$ to within a factor of  $c_*=\sqrt{5-2\sqrt{3}} \approx 1.24$.

In this paper we first prove that for $m=2$ it is possible to produce an approximation of diameter to within the factor of $c_*$ in $t=2$ iterations, thus giving an $O(n)$ approximation algorithm. In fact running this algorithm for general case of $m$ only $t=2$ iterations produces very good approximation for large test data.  Additionally,  we describe a simple randomized algorithm to approximate the diameter of a finite set of points in any dimension $m$.  This algorithm is a modified version of the algorithm presented in \cite{kal}.  We also test this algorithm for large data sets, making comparison with several algorithms in the literature. Our computational results demonstrate that the proposed algorithms here are superior in performance to the existing ones. The proposed algorithms appear to be extremely fast for a large variety of point distributions, in large dimensions. Moreover, these algorithms do not need to construct any
complicated data structure and very easy to implement. In addition to the memory required for the data, they only use constant memory.
The most relevant work to ours are those in \cite{mal,her} which we make  comparison to.

In  Section 2, we present an approximation algorithm described in \cite{kal} but prove that  in 2D it approximated the diameter to within a factor  of $c_*=\sqrt{5-2\sqrt{3}} \approx 1.24$. In Section 3, we formally describe  this algorithm for arbitrary dimension $m$, and give a
randomized version.  In Section 4, we present experimental results of the proposed algorithms in various dimensions and make comparison with several existing algorithms.

\section{A Fast approximation of Diameter in 2D} \label{sec1}

Let $S=\{p_1, \dots, p_n\}$ be a subset of $\mathbb{R} ^m$. We will first assume $m \geq 2$ is an arbitrary integer and describe an approximation algorithm but we will analyze the performance for $m=2$.  Let ${\rm diam} (S)$ be the diameter of $S$. Let $d( \cdot, \cdot)$ denote the Euclidean distance. Given $p \in \mathbb{R} ^m$, $r >0$, let
$B_r(p)=\{x \in \mathbb{R} ^m: d(x,p) \leq r\}$, the ball of radius $r$ centered at $p$. For a given point $p \in\mathbb{R} ^m$, let $f(p)$ denote the farthest point of $p$ in $S$. Let $r_p=d(p,f(p))$. We write $f^2(p)$ for $f(f(p))$.

Consider the following algorithm.  Pick arbitrary $p \in S$. Compute $f(p)$. Clearly, $S \subset B_{r_p}(p)$, see Figure \ref{fig0new}, and we have,
\begin{equation}
r_p\leq {\rm diam} (S) \leq 2 r_p.
\end{equation}

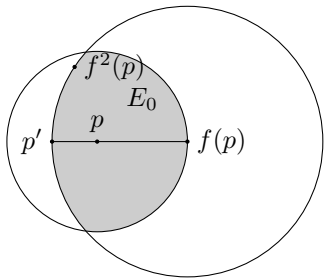
\begin{figure}[h]
	\centering
	\begin{tikzpicture}[scale=0.3]
\begin{scope}[red]
		 \clip  (4.0,0.0) circle (6);
		 \clip (0.0,0.0) circle (4);
\fill[color=gray!39] (-10, 10) rectangle (10, -10);
\end{scope}

		\begin{scope}[black]

		\draw (0.0,0.0) circle (4.0);
		\draw (4.0,0.0) circle (6);
		\end{scope}

		\filldraw (0,0) circle (2pt) node[above] {$p$};
\filldraw (2,1) node[above] {$E_0$};
		\filldraw (4,0) circle (2pt) node[right] {$f(p)$};
		\filldraw (-1,3.31) circle (2pt) node[right] {$f^2(p)$};
        \filldraw (-2,0) circle (2pt) node[left] {$p'$};

         \draw (-2.0,0.0) -- (4,0);
	\end{tikzpicture}
	
\begin{center}
\caption{$E_0$, an initial region containing $S$ based on two farthest point computations.} \label{fig0new}
\end{center}
\end{figure}

Next compute $f^2(p)$.  Let
\begin{equation}
E_0=B_{r_p}(p) \cap B_{r_{f(p)}}(f(p)).
\end{equation}
Clearly, $S \subset E_0$. The set $E_0$ is the intersection of two balls, forming an uneven eye-shape, see gray area in Figure \ref{fig0new}.  Its diameter gives a better factor bound than $2$. To estimate the diameter of $E_0$ we include it in a larger eye-shape region whose diameter can be estimated conveniently. Set
\begin{equation} \label{pprime}
p'=\alpha p + (1- \alpha) f(p), \quad \alpha = \frac{r_{f(p)}}{r_p}.
\end{equation}
This point $p'$ lies on the affine line joining $p$ and $f(p)$, a distance of $r_{f(p)}$ from $f(p)$. Let
\begin{equation}
E=B_{r_{f(p)}}(p') \cap B_{r_{f(p)}}(f(p)).
\end{equation}
The eye-shape region $E$ is the intersection of two balls
passing through each other's centers, see gray area in Figure \ref{fig1new}. The diameter of $E$ is known to be $\sqrt{3} r_{f(p)}$, see \cite{kal}.  From this it follows that
\begin{equation}
r_{f(p)} \leq {\rm diam} (S) \leq \sqrt{3} r_{f(p)}.
\end{equation}

The  diameter of $E$ is attained as the distance between the two corners of this eye-shape.  Let $c_1$ and $c_2$ be these corners, see Figure \ref{fig1new}. Clearly the complexity to obtain this $\sqrt{3}$-approximation to diameter is $2mn$ arithmetic operations.

To improve this bound, in \cite{kal} the following iterative procedure is described:  Let $q$ be the midpoint of $p'$ and $f(p)$, see Figure \ref{fig2new} for a case in the Euclidean plane. Specifically, from \ref{pprime} we have
\begin{equation} \label{q}
q = \frac{1}{2}p'+ \frac{1}{2} f(p) =\frac{\alpha}{2} p +(1- \frac{\alpha}{2}) f(p), \quad \alpha = \frac{r_{f(p)}}{r_p}.
\end{equation}

Compute $f(q)$ and $f^2(q)$.
If $d(f(q),f^2(q)) \leq d(f(p),f^2(p))$, then
\begin{equation}
{\rm diam} (S) \leq  c_* d(f(p),f^2(p)), \quad
c_*={\sqrt{5-2\sqrt{3}}} \approx 1.24.
\end{equation}
Otherwise, replaces $S$ with $S \setminus \{p,f(p)\}$, and repeat the process,
replacing $p$ with $q$, $f(p)$ with $f(q)$. That is, let $q'$ be the point on the line segment $qf(q)$ a distance of $r_{f(q)}/2$ from $f(q)$. Then compute $f(q')$ and $f^2(q')$ and checks if $d(f(q'),f^2(q')) \leq d(f(q),f^2(q))$. If so,  ${\rm diam} (S) \leq  c_* d(f(q),f^2(q))$. If not,  iterates again. Eventually, in $t \leq n$ iterations, each of cost $O(mn)$, we obtain an approximation of ${\rm diam} (S)$ to within a factor of $c_*$.  However, in \cite{kal} no constant bound on $t$ is given.  In the forgoing arguments we prove that when $m=2$,
\begin{equation}
{\rm diam} (S) \leq  c_* \max \bigg \{ d(f(p),f^2(p)), d(f(q),f^2(q)) \bigg \}.
\end{equation}

Hence in at most $8n$ operations, the cost of computing the farthest point of $4$ points, namely $f(p)$, $f^2(p)$, $f(q)$, $f^2(q)$, we have an approximation of diameter to with a factor of $c_*$. We thus improve the results in \cite{kal} for $m=2$.

We now proceed to prove this result. Having picked an arbitrary point $p \in S$,  we compute $f(p)$ and $f^2(p)$, and $p'$ as described above. Let $q$ be the midpoint of $p'$ and $f(p)$, see Figure \ref{fig2new} for a case in the Euclidean plane. Compute $f(q)$. Let
\begin{equation}
\rho_*=\frac{c_*}{2}= \frac{\sqrt{5-2\sqrt{3}}}{2} \approx .62.
\end{equation}
If
\begin{equation}
r_{q}  \leq \rho_* r_{f(p)},
\end{equation}
then
\begin{equation}
{\rm diam} (S) \leq 2r_{q} \leq c_* r_{f(p)}.
\end{equation}
This is obvious since
$r_{q} \leq \rho_* r_{f(p)} $ implies $S$ is contained in $B_{r_q}(q)$.
So we assume $r_{q} > \rho_* r_{f(p)}$.

Without loss of generality assume that
\begin{equation}
p'=(-\frac{1}{2},0, \dots, 0) \in \mathbb{R} ^m,  \quad f(p)=(\frac{1}{2},0, \dots, 0) \in \mathbb{R} ^m.
\end{equation}

Thus $q=0$ is the midpoint of the line segment $p'f(p)$. We may also assume that the corner points of $E$ are located at
\begin{equation}
c_1=(0, \dots, 0, \frac{\sqrt{3}}{2}) \in \mathbb{R} ^m,  \quad c_2=(0, \dots, 0, \frac{\sqrt{3}}{2}) \in \mathbb{R} ^m.
\end{equation}

Let $E_{1}$ and $E_{2}$ be defined as the two halves of $E$ determined as the intersection of $E$ with the orthogonal bisecting hyperplane to the line $c_1c_2$ (the upper and lower parts of $E$). We thus have $r_{f(p)}=1$. Let
\begin{equation}
r=r_{q}=d(q,f(q)).
\end{equation}
We assume  $r > \rho_*$ (since otherwise ${\rm diam}(S) \leq c_*$). To improve the bound on diameter we  compute of $f^2(q)$, the farthest point of $f(q)$ in $S$.   Let
\begin{equation}
d=d(f(q), f^2(q)).
\end{equation}
We first prove that if $d \geq 1$ the following holds
\begin{equation}
d \leq {\rm diam} (S) \leq c_* d.
\end{equation}

To prove this, consider first the case of Euclidean plane, see Figure \ref{fig2new}. Assume without loss of generality $f(q)$ has nonnegative coordinates.

\begin{figure}[h]
	\centering
	\begin{tikzpicture}[scale=0.3]
\begin{scope}[red]
		 \clip  (4.0,0.0) circle (6);
		 \clip (-2.0,0.0) circle (6);
\fill[color=gray!39] (-10, 10) rectangle (10, -10);
\end{scope}

		\begin{scope}[black]

		\draw (0.0,0.0) circle (4.0);
		\draw (4.0,0.0) circle (6);
		\draw (-2.0,0.0) circle (6);
		\end{scope}

\filldraw (2,1) node[above] {$E$};
		\filldraw (0,0) circle (2pt) node[above] {$p$};
		\filldraw (4,0) circle (2pt) node[right] {$f(p)$};
		\filldraw (-1,3.31) circle (2pt) node[right] {$f^2(p)$};
        \filldraw (-2,0) circle (2pt) node[left] {$p'$};
         \filldraw (1,5.195) circle (2pt) node[above] {$c_1$};
         \filldraw (1,-5.195) circle (2pt) node[below] {$c_2$};

         \draw (-2.0,0.0) -- (4,0);
	\end{tikzpicture}
	
\begin{center}
\caption{$E$, a region that contains $S$ with diameter bounded by $\sqrt{3} {\rm diam}(S)$.} \label{fig1new}
\end{center}
\end{figure}
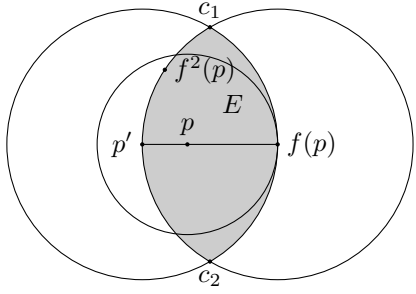

Let $q_*=(x_*,y_*)$ be the intersection of the two circles
\begin{equation}
(x-\frac{1}{2})^2+y^2=1, \quad x^2+y^2=r^2,
\end{equation}
where $x_* <0$, and $y_* >0$. This gives
\begin{equation}
x_*=r^2- \frac{3}{4} <0.
\end{equation}
We have
\begin{equation} \label{ystar}
y_*^2=r^2-x_*^2=r^2-(r^3-\frac{3}{4})^2= -r^4+\frac{5}{2}r^2- \frac{9}{16}.
\end{equation}
So
\begin{equation}
y_*=\sqrt{-r^4+ \frac{5}{2}r^2- \frac{9}{16}}.
\end{equation}
Let $w$ be the solution to the intersection of the sphere of radius $d$ centered at $f(q)$ and the sphere $(x+\frac{1}{2})^2+y^2=1$. Thus
\begin{equation}
d(f(q), w)=d.
\end{equation}
Note that we must have $d(q,w) \leq d(q,f(q))$.  Let $\overline q_*=(-x_*, y_*)$. Let $q'_*=(u,v)$ be the solution  to
\begin{equation}\label{eqx}
(x+\frac{1}{2})^2+y^2=1, \quad (x+ x_*)^2+(y-y_*)^2=d^2,
\end{equation}
where $v \leq 0$, (see Figure \ref{fig2new}). Thus
\begin{equation}
d=d(\overline q_*, q_*').
\end{equation}

Let $r'=d(q,q_*')$. Thus $u^2+v^2=r'^2$.

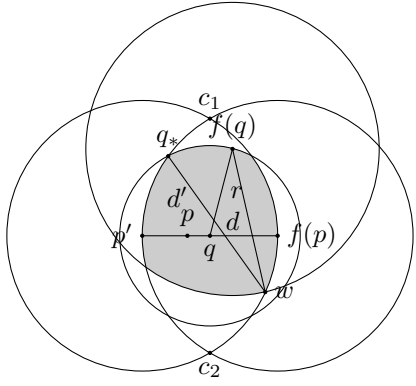
\begin{figure}[h]
	\centering
	\begin{tikzpicture}[scale=0.3]
		
\begin{scope}[red]
		 \clip  (4.0,0.0) circle (6);
		 \clip (-2.0,0.0) circle (6);
\clip (1.0,0.0) circle (4);
\clip (2,3.85) circle (6.5);
\fill[color=gray!39] (-10, 10) rectangle (10, -10);
\end{scope}

		\begin{scope}[black]

		\draw (4.0,0.0) circle (6);
		\draw (-2.0,0.0) circle (6);
\draw (1.0,0.0) circle (4);
\draw (2,3.85) circle (6.5);
		\end{scope}

		\filldraw (0,0) circle (2pt) node[above] {$p$};
		\filldraw (4,0) circle (2pt) node[right] {$f(p)$};
        \filldraw (-2,0) circle (2pt) node[left] {$p'$};
         \filldraw (1,5.195) circle (2pt) node[above] {$c_1$};
         \filldraw (1,-5.195) circle (2pt) node[below] {$c_2$};

         \filldraw (1,0) circle (2pt) node[below] {$q$};
         \filldraw (2,3.85) circle (2pt) node[above] {$f(q)$};
         \filldraw (3.45,-2.5) circle (2pt) node[right] {$w$};
         \filldraw (-.85,3.53) circle (2pt) node[above] {$q_*$};
         \draw (2,3.85) -- (3.45,-2.5) node[pos=0.5, left] {$d$};
         \draw (-.85,3.53) -- (3.45,-2.5) node[pos=0.3, left] {$d'$};
         \draw (1,0) -- (2,3.85) node[pos=0.5, right] {$r$};
         \draw (-2.0,0.0) -- (4,0);
		
	\end{tikzpicture}
	
\begin{center}
\caption{The region $E \cap B_r(q) \cap B_{d}(f(q))$, having diameter $d'$.} \label{fig2new}
\end{center}
\end{figure}

Let
\begin{equation}
d_*=d( q_*, q_*'), \quad d'=d(q_*,w).
\end{equation}

\begin{figure}[h]
	\centering
	\begin{tikzpicture}[scale=0.3]
		
\begin{scope}[red]
		 \clip  (4.0,0.0) circle (6);
		 \clip (-2.0,0.0) circle (6);
\clip (1.0,0.0) circle (4);
\clip (2.85,3.53) circle (6.732);
\fill[color=gray!39] (-10, 10) rectangle (10, -10);
\end{scope}

		\begin{scope}[black]

		\draw (4.0,0.0) circle (6);
		\draw (-2.0,0.0) circle (6);
\draw (1.0,0.0) circle (4);
\draw (2.85,3.53) circle (6.732);
		\end{scope}

		\filldraw (0,0) circle (2pt) node[above] {$p$};
		\filldraw (4,0) circle (2pt) node[right] {$f(p)$};
        \filldraw (-2,0) circle (2pt) node[left] {$p'$};
         \filldraw (1,5.195) circle (2pt) node[above] {$c_1$};
         \filldraw (1,-5.195) circle (2pt) node[below] {$c_2$};

         \filldraw (1,0) circle (2pt) node[below] {$q$};
         \filldraw (2,3.85) circle (2pt) node[above] {$f(q)$};
         \filldraw (3.45,-2.5) circle (2pt) node[right] {$w$};
         \filldraw (3.05,-3.2) circle (2pt) node[left] {$q'_*$};
         \filldraw (-.85,3.53) circle (2pt) node[above] {$q_*$};
        \filldraw (2.85,3.53) circle (2pt) node[right] {$\overline q_*$};
         \draw (2.85,3.53) -- (3.05,-3.2) node[pos=0.4, above] {$~~d$};
         \draw (-.85,3.53) -- (3.05,-3.2) node[pos=0.3, above] {$d_*$};
         \draw (2,3.85) -- (3.45,-2.5) node[pos=0.5, left] {$d$};
         \draw (1,0) -- (2,3.85) node[pos=0.5, right] {$r$};
         \draw (1,0) -- (3.05,-3.2)  node[pos=0.5, left] {$r'$};
         \draw (-2.0,0.0) -- (4,0);
		
	\end{tikzpicture}
	
\begin{center}
\caption{The region $E \cap B_r(q) \cap B_{d}(\overline q_*)$, having diameter $d_*$.} \label{fig3new}
\end{center}
\end{figure}

We first  prove some lemmas.

\begin{lemma} \label{lem1} Suppose $r' \leq r$ and  $d  \geq 1$. Then ${\rm diam}(E \cap B_r(q) \cap B_d(f(q)))=d'$.
\end{lemma}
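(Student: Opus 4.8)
\emph{Proof plan.} Write $R := E\cap B_r(q)\cap B_d(f(q))=B_1(p')\cap B_1(f(p))\cap B_r(q)\cap B_d(f(q))$, using the normalization $p'=(-\tfrac12,0)$, $f(p)=(\tfrac12,0)$, $q=0$, $r_{f(p)}=1$. Then $R$ is an intersection of four disks, hence convex, and $\partial R$ is a union of at most four circular arcs, one from each of the circles $\partial B_1(p')$, $\partial B_1(f(p))$, $\partial B_r(q)$, $\partial B_d(f(q))$. I would prove ${\rm diam}(R)=d(q_*,w)=d'$ by the two inequalities separately. The inequality ${\rm diam}(R)\ge d'$ only needs $q_*,w\in R$: using $x_*=r^2-\tfrac34$, the formula \eqref{ystar} for $y_*$, and the bound $r=r_q\le\tfrac{\sqrt3}{2}$ (valid since $S\subseteq E_0\subseteq E\subseteq B_{\sqrt3/2}(q)$), one checks $q_*\in B_1(p')\cap B_d(f(q))$ and $w\in B_1(f(p))\cap B_r(q)$; the containment $w\in B_r(q)$ is exactly the observation $d(q,w)\le d(q,f(q))=r$ already recorded when $w$ was introduced, and it is precisely the hypotheses $r'\le r$ and $d\ge 1$ that make $w$ a genuine vertex of $\partial R$, i.e.\ that let the arc of $\partial B_d(f(q))$ reach the arc of $\partial B_1(p')$ without being truncated by $\partial B_r(q)$.

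The upper bound ${\rm diam}(R)\le d'$ is the substance. Here I would use convexity: a diametral pair $(a,b)$ of $R$ is a double-normal chord, so if $a$ lies in the relative interior of an arc lying on a circle $\partial B_\rho(c)$, then $a,c,b$ are colinear and $a$ is the point of $\partial B_\rho(c)$ farthest from $b$, giving $d(a,b)=\rho+d(c,b)$. Iterating this remark, ${\rm diam}(R)$ equals the maximum of (i) the distance between two vertices of $R$; (ii) $\rho+d(c,v)$ for a pair $(c,\rho)\in\{(p',1),(f(p),1),(q,r),(f(q),d)\}$ and a vertex $v$ of $R$; and (iii) $\rho_1+\rho_2+d(c_1,c_2)$ for two such pairs — where in (ii) and (iii) the implied farthest point is required to lie in $R$. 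Since all four centers $p',f(p),q,f(q)$ lie in the small region $E\subseteq B_{\sqrt3/2}(0)$ while $\rho_*<r\le\tfrac{\sqrt3}{2}$ and $d\ge 1$, one checks that in every configuration of type (ii) or (iii) the purported farthest point escapes one of $B_1(f(p))$, $B_1(p')$, $B_r(q)$ — for instance, with $c=f(q)$, $\rho=d$, $v=\bar q_*$, the point $f(q)+d\,\widehat{f(q)-\bar q_*}$ lies strictly outside $B_r(q)$, because $f(q)$ and $\bar q_*$ both lie on $\partial B_r(q)$ and hence $\langle f(q),\,f(q)-\bar q_*\rangle>0$. Thus (ii) and (iii) are vacuous and ${\rm diam}(R)$ reduces to a maximum over the finitely many vertex pairs of $R$; identifying these vertices — for the generic arc structure, $q_*$, $\partial B_1(f(p))\cap\partial B_d(f(q))$, $w$, and $\bar q_*$ in cyclic order — and comparing their pairwise distances by monotonicity in $r$ over $(\rho_*,\tfrac{\sqrt3}{2}]$ and in $d$ over $[1,\infty)$ shows $d(q_*,w)$ is the largest. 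The hypothesis $r'\le r$ re-enters here, pinning down which arc occupies the ``lower-right'' of $\partial R$ and hence that the vertex opposite $q_*$ is $w$ rather than a point of $\partial B_r(q)$.

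The main obstacle is this upper bound, and within it the extreme-point localization: cleanly ruling out that the diameter is realized with an endpoint in the interior of an arc (the cases (ii), (iii)), together with the bookkeeping of which of the four arcs actually appears on $\partial R$ under the standing hypotheses $r'\le r$, $d\ge 1$, $r\le\tfrac{\sqrt3}{2}$. Once the vertex set is correctly pinned down, the rest is elementary (if somewhat tedious) coordinate algebra in the two parameters $r$ and $d$, and the conceptual content is just convexity plus tracking which disk's boundary supplies each piece of $\partial R$.
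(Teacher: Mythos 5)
Your plan follows essentially the same route as the paper's proof: argue that a diametral pair of the convex, arc-bounded region must occur at corner points (the paper via its orthogonal supporting-line argument, you via the equivalent double-normal characterization, with the $r'\le r$ hypothesis ruling out the dangerous $\partial B_r(q)$-arc configuration), and then compare the finitely many corner pairs, using the position of $f(q)$, to single out $(q_*,w)$. If anything your sketch is more explicit than the paper's, which does not verify $q_*,w\in E\cap B_r(q)\cap B_d(f(q))$ nor spell out the exclusion of arc-interior endpoints, so the proposal is sound and correctly identifies where the remaining work lies.
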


\begin{proof} Consider a pair of points $u$, $v$ in the corresponding domain such that ${\rm diam}(E \cap B_r(q) \cap B_d(f(q)))=d(u,v)$, see Figure \ref{fig2new}. We claim that  $u$ and $v$ must be the extreme points of the domain (corner points).  Otherwise, assuming that $u,v$ do not coincide with $p', f(p)$, consider the line segment $uv$. Then the line that is orthogonal to this line segment, either at $u$ or at $v$, must cut through the feasible region.  But this means $d(u,v)$ is not the claimed diameter since it can be increased within the feasible domain.  On the other hand if  $u,v$ does coincide with $p', f(p)$, then we can change $f(p)$ to a corner point on its bounding circle until it touches another circle in which case we can replace it with a corner point. Now once one of $u$ or $v$ is a corner point, say $u$, we can consider the orthogonal line to $uv$ at $v$. Again this line must cut through the feasible region, contradicting that $d(u,v)$ is diameter. Once we have argued that both $u,v$ are extreme points we can consider different pairs and since we have assumed that $f(q)$ has nonnegative coordinates it follows that the diameter is $d'=d(q_*,w)$ as claimed.
\end{proof}

\begin{lemma} \label{lem2} Suppose $r' \leq r$ and  $d  \geq 1$. Then
${\rm diam}(E \cap B_r(q) \cap B_d(\overline q_*))=d_*$.
\end{lemma}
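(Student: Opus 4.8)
The plan is to follow the proof of Lemma~\ref{lem1} almost verbatim, with $\overline q_*$ taking over the role that $f(q)$ played there. Set $R = E \cap B_r(q) \cap B_d(\overline q_*)$. Because $E = B_1(p') \cap B_1(f(p))$, the set $R$ is an intersection of four (closed) balls, hence convex, and $\partial R$ is a union of arcs of the four circles $\partial B_1(p')$, $\partial B_1(f(p))$, $\partial B_r(q)$, $\partial B_d(\overline q_*)$. The first step would be to repeat, word for word, the extremal argument of Lemma~\ref{lem1}: if $u,v \in R$ realize ${\rm diam}(R) = d(u,v)$ and one of them, say $u$, lies in the relative interior of one of these arcs, then the line through $u$ perpendicular to $uv$ cuts into $R$, so $d(u,v)$ could be strictly increased -- a contradiction; and the degenerate case where $u$ or $v$ coincides with $p'$ or $f(p)$ is removed exactly as before by sliding the offending point along its bounding circle until it becomes a corner. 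This reduces the lemma to identifying, among the finitely many corner points of $R$, the pair at maximum distance.

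Next I would pin down the geometry of $\overline q_*$. A short computation from $x_*^2+y_*^2=r^2$ and $x_*=r^2-\tfrac34$ gives the two identities $d(p',\overline q_*)=1$ and $d(q,\overline q_*)=r$; hence $\overline q_*$ lies on $\partial B_1(p') \cap \partial B_r(q)$ and, since $-x_*=\tfrac34-r^2\ge 0$ and $y_*>0$, it is precisely the first-quadrant corner of $E \cap B_r(q)$ -- the mirror image of $q_*$ in the line $c_1c_2$. I would then verify that $q_*$ is a corner of $R$: it is on $\partial B_1(f(p)) \cap \partial B_r(q)$ by definition; it lies in $B_1(p')$ because $(x_*+\tfrac12)^2+y_*^2 = 2r^2-\tfrac12 \le 1$, using that $S \subseteq E$ forces $r=r_q \le \tfrac{\sqrt3}{2}$; and it lies in $B_d(\overline q_*)$ because $d(q_*,\overline q_*)=2|x_*| = 2(\tfrac34-r^2) < 1 \le d$, using $\tfrac14 < \rho_*^2 < r^2 \le \tfrac34$. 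Likewise $q_*'$ is a corner of $R$: it is on $\partial B_1(p') \cap \partial B_d(\overline q_*)$ by definition; the hypothesis $r'=|q_*'|\le r$ puts it in $B_r(q) \subseteq B_{\sqrt3/2}(q)$; and the only part of $\partial B_1(p')$ meeting $B_{\sqrt3/2}(q)$ is the arc from $c_2$ through $f(p)$ to $c_1$ (the points of $\partial B_1(p')$ with nonnegative first coordinate), on which every point automatically lies in $B_1(f(p))$ -- so $q_*' \in B_1(f(p))$ as well.

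Finally, because $\overline q_*$ lies in the closed first quadrant, precisely the configuration under which Lemma~\ref{lem1} was proved, I would rerun its comparison of the finitely many pairs of corner points of $R$, now with $\overline q_*$ in place of $f(q)$; its conclusion is that the diameter is attained between $q_*$ and the corner $q_*'$ that the ball $B_d(\overline q_*)$ produces on $\partial B_1(p')$ on the side $v\le 0$, so ${\rm diam}(R)=d(q_*,q_*')=d_*$. I expect this last combinatorial step to be the main obstacle: one must check that no other pair of corners of $R$ -- in particular no pair involving $q_*'$ or involving the two lower corners of $E \cap B_r(q)$ -- is farther apart than $q_*$ and $q_*'$. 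This is exactly the point at which Lemma~\ref{lem1} invoked the nonnegativity of the center's coordinates, and here the same bookkeeping must be carried out with $\overline q_*$, using $d \ge 1$ to determine which of those lower corners still lie in $R$ and $r' \le r$ to ensure that the corner $q_*'$ is not itself cut off by $\partial B_r(q)$.
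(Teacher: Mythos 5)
Your proposal takes essentially the same route as the paper, whose entire proof consists of saying that the argument of Lemma~\ref{lem1} applies analogously with $\overline q_*$ in place of $f(q)$; you carry out exactly that analogy, and in fact supply verifications the paper leaves implicit (that $\overline q_*$ is the first-quadrant corner of $E \cap B_r(q)$ at distance $1$ from $p'$ and $r$ from $q$, and that $q_*$ and $q_*'$ are genuine corners of the region). The final corner-pair comparison you flag as remaining bookkeeping is treated no more rigorously in the paper's own proof, so this is not a gap relative to it.
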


\begin{proof} For this consider Figure \ref{fig3new}. The proof of this lemma is analogous to the proof of previous one.
\end{proof}

\begin{lemma} \label{lem3} Suppose $r' \leq r$ and  $d  \geq 1$.
Then $ d \leq {\rm diam} (S) \leq d'  \leq d_*$.
\end{lemma}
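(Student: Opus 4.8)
The plan is to prove the chain $d\le{\rm diam}(S)\le d'\le d_*$ one inequality at a time; the first two are short and the entire difficulty is the last. For $d\le{\rm diam}(S)$: the points $f(q)$ and $f^2(q)$ both lie in $S$ and $d=d(f(q),f^2(q))$, so $d$ is attained as a distance between two points of $S$. For ${\rm diam}(S)\le d'$: since $f(q)$ is the farthest point of $q$ in $S$ we have $S\subseteq B_r(q)$; since $f^2(q)$ is the farthest point of $f(q)$ in $S$ we have $S\subseteq B_d(f(q))$; and $S\subseteq E_0\subseteq E$ by construction. Hence $S\subseteq E\cap B_r(q)\cap B_d(f(q))$, and since the hypotheses $r'\le r$ and $d\ge1$ of Lemma~\ref{lem1} are in force, that lemma gives ${\rm diam}\bigl(E\cap B_r(q)\cap B_d(f(q))\bigr)=d'$, whence ${\rm diam}(S)\le d'$.

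The inequality $d'\le d_*$ is the crux. By Lemma~\ref{lem1}, $d'=d(q_*,w)$ is the diameter of $E\cap B_r(q)\cap B_d(f(q))$, a quantity depending only on the location of $f(q)$; by Lemma~\ref{lem2} the analogous diameter obtained by replacing $f(q)$ with $\overline q_*$ equals $d(q_*,q'_*)=d_*$, and when $f(q)=\overline q_*$ the two circle equations defining $w$ become exactly those defining $q'_*$. So it suffices to show that, among all admissible positions of $f(q)$, the corner distance $d(q_*,w)$ is largest when $f(q)=\overline q_*$. I would argue this in three steps. (i)~Since $f(q)\in S\subseteq E$, $d(q,f(q))=r$, and (in the normalization already fixed) $f(q)$ has nonnegative coordinates, checking the two ball constraints $d(f(q),p')\le1$ and $d(f(q),f(p))\le1$ shows that $f(q)$ ranges precisely over the arc of $\partial B_r(q)$ from $\overline q_*=(-x_*,y_*)$ to $(0,r)$, the endpoint $\overline q_*$ giving the configuration of Lemma~\ref{lem2}. (ii)~As $f(q)$ moves up this arc away from $\overline q_*$, the point $w$---the intersection of the circle $C:(x+\tfrac{1}{2})^2+y^2=1$ with the circle of radius $d$ about $f(q)$, chosen so that $d(q,w)\le r$---moves monotonically up the arc $C\cap B_r(q)$, whose endpoints are $\overline q_*$ and $(-x_*,-y_*)$; consequently $w$ lies lowest on that arc exactly when $f(q)=\overline q_*$, where $w=q'_*$. (iii)~For every $z\in C$ one has
\begin{equation}
d(q_*,z)^2=1+|q_*-p'|^2-2\,(z-p')\cdot(q_*-p'),
\end{equation}
so $d(q_*,z)$ is monotone in the angle between $z-p'$ and the fixed vector $q_*-p'$; since $q_*-p'$ lies in the first quadrant at an angle larger than that of $\overline q_*-p'$, hence larger than that of $z-p'$ for every $z$ on the arc $C\cap B_r(q)$, the value $d(q_*,z)$ increases as $z$ descends that arc. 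Combining (ii) and (iii): the lowest admissible $w$, namely $q'_*$, yields the largest $d(q_*,w)$, so $d'=d(q_*,w)\le d(q_*,q'_*)=d_*$.

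I expect step~(ii) to be the main obstacle: it asserts that the intersection of the fixed circle $C$ with a circle of fixed radius $d$ whose center slides along $\partial B_r(q)$ moves monotonically along $C$, which is not a formal consequence of anything proved so far. I anticipate verifying it by explicit coordinates---writing $f(q)=(r\cos\theta,r\sin\theta)$ with $\theta\in[\theta_1,\pi/2]$, where $\theta_1$ is the angular position of $\overline q_*$, solving the two circle equations for $w(\theta)$, and checking that $\tfrac{d}{d\theta}\,d(q_*,w(\theta))^2\le0$ on that interval. This is routine but not short, and is feasible only because the configuration is planar and every relevant point is given in closed form in $r$ and $d$; everything else in the lemma reduces to the containments above and the elementary identity in step~(iii).
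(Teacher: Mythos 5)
Your handling of the first two inequalities is fine and matches the paper: $d$ is a distance between two points of $S$, and $S\subseteq E\cap B_r(q)\cap B_d(f(q))$ together with Lemma~\ref{lem1} gives ${\rm diam}(S)\le d'$. The problem is the third inequality, where your argument has a genuine gap exactly at the step you flag yourself: step~(ii), the claim that as $f(q)$ slides up the arc of $\partial B_r(q)$ the corner $w$ slides monotonically up the arc $C\cap B_r(q)$, is never proved --- you only announce a plan to verify it by parametrizing $f(q)=(r\cos\theta,r\sin\theta)$ and checking a derivative. That monotonicity is the entire content of $d'\le d_*$ in your scheme, it does not follow from anything established earlier, and until the computation is carried out the proof is incomplete. (Step~(iii) is essentially sound, though the angle bookkeeping --- that the angle between $z-p'$ and $q_*-p'$ stays in $[0,\pi]$ as $z$ descends the arc --- should be stated explicitly.)

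The paper avoids this variational machinery entirely, and its shortcut is worth internalizing: it observes that the single point $w$ is feasible for $E\cap B_r(q)\cap B_d(\overline q_*)$ (note this region is \emph{not} a superset of $E\cap B_r(q)\cap B_d(f(q))$, so one really does check the point $w$ itself), and since $q_*$ also lies in that region, Lemma~\ref{lem2} immediately gives $d'=d(q_*,w)\le d_*$. The only nontrivial membership is $d(\overline q_*,w)\le d=d(f(q),w)$, and this follows from a one-line version of the distance-versus-angle identity you already use in step~(iii), applied at $q$ instead of $p'$: both $\overline q_*$ and $f(q)$ lie on $\partial B_r(q)$, $w\in B_r(q)$, and (as seen from $q$) $w$ sits at an angle below $\overline q_*$, which sits below $f(q)$, so the law of cosines gives $d(\overline q_*,w)\le d(f(q),w)$. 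In short: you do not need to know how $w$ moves as $f(q)$ varies; you only need that the fixed point $w$ lies within distance $d$ of $\overline q_*$. Replacing your steps (ii)--(iii) with this observation would close the gap and reduce the proof to the paper's.
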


\begin{proof} The first two inequalities are obvious. To prove the last inequality we only need to observe that  $w$ is feasible to $E \cap B_d(\overline q_*)$ (even though $E \cap B_r(q) \cap B_d(\overline q_*)$ is not a subset of  $E \cap B_r(q) \cap B_d(f(q))$). Figure \ref{fig4new} gives a superposition of the regions.
\end{proof}

Next we state our main result.

\begin{thm} Suppose $r' \leq r$ and  $d  \geq 1$. Then,  ${d_*}\leq c_* d$.
\end{thm}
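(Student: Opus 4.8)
The plan is to re‑parametrize the configuration by two real numbers, reduce the assertion $d_*\le c_* d$ to a single algebraic inequality, and then reduce that inequality to a couple of elementary quadratic ones via symmetric functions.

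First I would set $a := \tfrac34 - r^2 = -x_*$ and $u := \tfrac34 - r'^2$, and write $g(x) := \tfrac34 - x - x^2$. Subtracting $x^2+y^2=r'^2$ from $(x+\tfrac12)^2+y^2=1$ shows the first coordinate of $q'_*$ is exactly $u$, whence $v^2 = r'^2 - u^2 = g(u)$ and $q'_* = (u,-\sqrt{g(u)})$ (using $v\le 0$); also the displayed identity for $y_*^2$ is precisely $y_*^2 = g(a)$, so $q_* = (-a,\sqrt{g(a)})$ and $\overline q_* = (a,\sqrt{g(a)})$. From $r>\rho_*$ one gets $0<a<\tfrac{\sqrt3-1}{2}<\tfrac12$; since $q'_*$ lies on a unit circle whose center is at distance $\tfrac12$ from $q$ one has $r'\in[\tfrac12,\tfrac32]$, hence $u\le\tfrac12$; and $r'\le r<\tfrac{\sqrt3}{2}$ gives $0<u$ and $a\le u$. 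In particular $g(a),g(u)\ge 0$. Then $d^2 = (u-a)^2 + \big(\sqrt{g(a)}+\sqrt{g(u)}\big)^2$ and $d_*^2 = (u+a)^2 + \big(\sqrt{g(a)}+\sqrt{g(u)}\big)^2$, so $d_*^2 = d^2 + 4au$ and the theorem becomes $4au \le (c_*^2-1)\,d^2$. Using $x^2+g(x)=\tfrac34-x$, the first expression simplifies to $d^2 = \tfrac32 - (a+u) - 2au + 2\sqrt{g(a)g(u)}$, so the hypothesis $d\ge1$ reads $2\sqrt{g(a)g(u)} \ge (a+u)+2au-\tfrac12$.

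The heart of the matter is the claim: if $d\ge1$ and $0\le a,u\le\tfrac12$ then $4au \le c_*^2-1 = (\sqrt3-1)^2$. I would prove this in terms of $\sigma := a+u$ (noting $\sigma^2\ge 4au$), where $g(a)g(u) = (au - \tfrac\sigma2 + \tfrac14)(au + \tfrac{3\sigma}2 + \tfrac94)$. If $\sigma+2au-\tfrac12\le 0$, then $0\le\sigma\le\tfrac12-2au$ forces $4au\le\sigma^2\le(\tfrac12-2au)^2$, hence $4a^2u^2 - 6au + \tfrac14\ge 0$ and so $au\le\tfrac{3-2\sqrt2}{4}$, well below $\tfrac{(\sqrt3-1)^2}{4}$. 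Otherwise $\sigma+2au-\tfrac12>0$, and squaring $d\ge1$ gives $4g(a)g(u)\ge(\sigma+2au-\tfrac12)^2$, which — the one computation worth carrying out — is equivalent to the \emph{linear} inequality $4\sigma^2 + 2\sigma - 12\,au - 2 \le 0$. Combined with $12\,au\le 3\sigma^2$ this yields $\sigma^2+2\sigma-2\le 0$, i.e.\ $\sigma\le\sqrt3-1$, so $4au\le\sigma^2\le(\sqrt3-1)^2=c_*^2-1$.

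To finish, multiply $2\sqrt{g(a)g(u)}\ge(a+u)+2au-\tfrac12$ by $c_*^2-1>0$ and substitute into $(c_*^2-1)d^2 = (c_*^2-1)\big(\tfrac32-(a+u)-2au+2\sqrt{g(a)g(u)}\big)$; the linear terms cancel and one is left with $(c_*^2-1)d^2 \ge c_*^2-1 \ge 4au$ by the claim, i.e.\ $d_*^2 \le c_*^2 d^2$. The main obstacle is precisely the claim $4au\le c_*^2-1$: the bare hypothesis $d\ge1$ gives only $4au/d^2\le 4au$, and $4au$ genuinely can exceed $c_*^2-1$ — in fact a direct check shows the theorem fails without the hypothesis $d\ge1$ (take, say, $a$ near $\tfrac{\sqrt3-1}{2}$ and $u$ near $\tfrac12$, where $d<1$ but $4au>c_*^2-1$), so the argument must exploit the coupling between $a,u$ and $d$, and the linearization of the squared form of $d\ge1$ in $\sigma$ and $au$ is what makes this possible. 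Finally, equality holds throughout at $a=u=\tfrac{\sqrt3-1}{2}$, where $d=1$ and $d_*=c_*$ (the limit $r,r'\to\rho_*$), so the bound cannot be sharpened by this method.
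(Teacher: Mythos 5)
Your proof is correct, and it takes a genuinely different route from the paper after the common starting point. Both arguments derive the same key identity: the paper computes $d_*^2=d^2-4x_*u$, which in your variables $a=\tfrac34-r^2$, $u=\tfrac34-r'^2$ is exactly $d_*^2=d^2+4au$, and both express $d^2$ through the same radical formula. From there the paths diverge. The paper bounds the ratio $4au/d^2$ by passing to $F(A,B)$ with $A=4r^2$, $B=4r'^2$, asserting the diagonal-maximality claim $F(A,B)\le F(A,A)=R(A)$ for $1<B\le A\le 3$ (justified only by ``calculus/graphing''), then using monotonicity of $R$ together with the standing assumption $r>\rho_*$; notably the hypothesis $d\ge 1$ never enters the paper's computation. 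You instead use $d\ge 1$ directly: symmetrizing in $\sigma=a+u$ and $au$, you show that (when $\sigma+2au-\tfrac12>0$) the condition $d^2\ge1$ is algebraically equivalent to the linear inequality $4\sigma^2+2\sigma-12au-2\le0$, which combined with $4au\le\sigma^2$ gives $\sigma\le\sqrt3-1$ and hence $4au\le c_*^2-1$, so $d_*^2=d^2+4au\le d^2+(c_*^2-1)d^2=c_*^2d^2$; the remaining case is a trivial quadratic estimate. What your approach buys is substantial: every step is an elementary verified inequality, whereas the paper's unproved claim $F(A,B)\le R(A)$ is in fact problematic near $B=1$ (e.g.\ $F(A,B)\to(3-A)/(A-1)>R(A)$ as $B\downarrow1$ for $A$ near $5-2\sqrt3$), a regime that is only excluded by the hypothesis $d\ge1$ which the paper does not invoke; your counterexample remark ($a$ near $\tfrac{\sqrt3-1}{2}$, $u$ near $\tfrac12$) makes precisely this point, and your identification of the equality case $a=u=\tfrac{\sqrt3-1}{2}$ (i.e.\ $r=r'=\rho_*$, $d=1$, $d_*=c_*$) matches the paper's worst-case figure. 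Two cosmetic remarks: the bound $r\le\sqrt3/2$ (needed for $a\ge0$ and real $y_*$) deserves the one-line justification that $f(q)\in E$ and every point of $E$ is within $\sqrt3/2$ of $q$; and your final ``multiply and substitute'' step is just a restatement of $d^2\ge1$, so it could be replaced by the direct chain $d_*^2=d^2+4au\le d^2+(c_*^2-1)\le c_*^2d^2$.
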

\begin{proof}
Note that
\begin{equation}
d^2_*= (u- x_*)^2+(v-y_*)^2.
\end{equation}
Also $u^2+v^2=r'^2$. From the first equation in (\ref{eqx}) we get
\begin{equation}
u=\frac{3}{4}-r'^2.
\end{equation}
Note that
\begin{equation}
v^2=r'^2-u^2=r'^2-(\frac{3}{4}-r'^2)^2= -r'^4+\frac{5}{2}r'^2- \frac{9}{16}.
\end{equation}
So
\begin{equation}
v= - \sqrt{-r'^4+\frac{5}{2}r'^2- \frac{9}{16}}.
\end{equation}
Then from the second equation in (\ref{eqx}) we get
\begin{equation}
(u^2+v^2)+ (x_*^2+y_*^2)+2ux_*- 2vy_*=d^2.
\end{equation}
Since $u^2+v^2=r'^2$ and  $x_*^2+y_*^2=r^2$ we get
\begin{equation}
2(x_*u-y_*v)=d^2-r^2-r'^2.
\end{equation}
We wish to bound the ratio $d_*/d$ where
\begin{equation}
d^2_*=  (u- x_*)^2+(v-y_*)^2.
\end{equation}
Note that we have
\begin{equation}
d_*^2=r^2+r'^2-2x_*u-2y_*v, \quad d^2=r^2+r'^2+2x_*u-2y_*v.
\end{equation}
So we get
\begin{equation}
d^2_*= d^2 -4x_*u.
\end{equation}
Substituting for $x_*$ and $u$ and dividing by $d^2$ we get
\begin{equation}
\frac{d^2_*}{d^2}= 1 + \frac{4}{d^2}(\frac{3}{4}-r^2)(\frac{3}{4}-r'^2).
\end{equation}
Note that
\begin{equation}
d^2=r'^2+r^2- 2(\frac{3}{4}-r^2)(\frac{3}{4}-r'^2)+2
 \sqrt{-r^4+ \frac{5}{2}r^2- \frac{9}{16}}\sqrt{-r'^4+\frac{5}{2}r'^2-\frac{9}{16}}.
 \end{equation}
Let $a=r^2$, $b=r'^2$. Since $r' \leq r$, we wish to bound the following ratio in the region $1/4 < b \leq a \leq 3/4$:
\begin{equation}
\frac{4(\frac{3}{4}-a)(\frac{3}{4}-b)}{
a+b- 2(\frac{3}{4}-a)(\frac{3}{4}-b)+2
 \sqrt{-a^2+ \frac{5}{2}a- \frac{9}{16}}\sqrt{-b^2+\frac{5}{2}b-\frac{9}{16}}}.
 \end{equation}

Multiplying and dividing by $4$ and simplifying we get
\begin{equation}
\frac{(3-4a)(3-4b)}{
4a+4b- 0.5(3-4a)(3-4b)+0.5
 \sqrt{-16a^2+ 40a- 9}\sqrt{-16b^2+40b-9}}.
\end{equation}

Let $A=4a$ and $B=4b$.  Then the ratio becomes
\begin{equation}
F(A,B)= \frac{(3-A)(3-B)}{
A+B- 0.5(3-A)(3-B)+0.5
 \sqrt{-A^2+ 10A- 9}\sqrt{-B^2+10B-9}}, \quad 1 < B \leq A \leq 3.
\end{equation}
When $A=B$ we get
\begin{equation}
R(A)=F(A,A)=\frac{(3-A)^2}{
2A- 0.5(3-A)^2+0.5(-A^2+ 10A- 9)}=\frac{(3-A)^2}{
-A^2+10A-9}, \quad 1 <A \leq 3.
\end{equation}
The function $R(A)$ is monotonically decreasing on $[1+\epsilon, 3]$, $\epsilon>0$ arbitrarily small.
Using this, we can find the appropriate bound.  So it suffices to show that for fixed $B$,  $1 < B < A$, we have $F(A,B) \leq R(A)$. But this can be verified by calculus (also graphing it shows this).

Since $d_*/d \leq  \sqrt{1+R(A)}$, and $ 1 <A=4r^2 \leq 3$, we find a value for $r$ so that
\begin{equation}
2r= \sqrt{1+R(A)}= \sqrt{1+\frac{(3-A)^2}{
-A^2+10A-9}}.
\end{equation}
Equivalently, squaring the above we get
\begin{equation}A=1+R(A).\end{equation}
Solving this we get $A=5-2 \sqrt{3} \approx 1.5359$. Then
$r= \sqrt{5-2\sqrt{3}}/2 \approx .62$, and
\begin{equation}
\sqrt{1+R(A)}= \sqrt{5-2\sqrt{3}} \approx 1.24.
\end{equation}
\end{proof}

\begin{figure}[h]
	\centering
	\begin{tikzpicture}[scale=0.3]
		
		\begin{scope}[black]

		\draw (4.0,0.0) circle (6);
		\draw (-2.0,0.0) circle (6);
\draw (1.0,0.0) circle (4);
\draw (2,3.85) circle (6.5);
\draw (2.85,3.53) circle (6.732);
		\end{scope}

		\filldraw (0,0) circle (2pt) node[above] {$p$};
		\filldraw (4,0) circle (2pt) node[right] {$f(p)$};
        \filldraw (-2,0) circle (2pt) node[left] {$p'$};
         \filldraw (1,5.195) circle (2pt) node[above] {$c_1$};
         \filldraw (1,-5.195) circle (2pt) node[below] {$c_2$};

         \filldraw (1,0) circle (2pt) node[below] {$q$};
         \filldraw (2,3.85) circle (2pt) node[above] {$f(q)$};
         \filldraw (3.45,-2.5) circle (2pt) node[right] {$w$};
         \filldraw (3.05,-3.2) circle (2pt) node[left] {$q'_*$};
         \filldraw (-.85,3.53) circle (2pt) node[above] {$q_*$};
\filldraw (2.85,3.53) circle (2pt) node[right] {$\overline q_*$};
\draw (1,0) -- (3.05,-3.2)  node[pos=0.5, left] {$r'$};
         \draw (3.45,-2.5) -- (-.85,3.53);
         \draw (1,0) -- (2,3.85);
         \draw (3.45,-2.5) -- (1,0);
         \draw (2,3.85) -- (3.45,-2.5);
         \draw (-.85,3.53) -- (3.05,-3.2);
         \draw (3.05,-3.2) -- (2.85,3.53);
          \draw (-.85,3.53) -- (2.85,3.53);

         \draw (-2.0,0.0) -- (4,0);
		
	\end{tikzpicture}
	
\begin{center}
\caption{Superposition of  $E$, $B_r(q)$, $B_{d}(f(q))$,  and $B_{d}(\overline q_*)$.} \label{fig4new}
\end{center}
\end{figure}
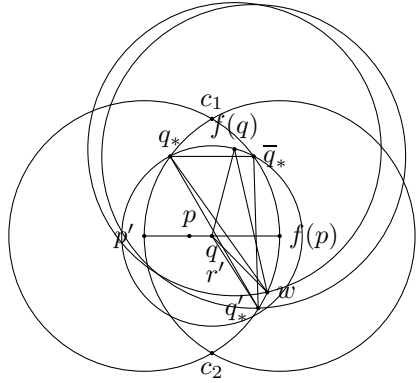

\begin{thm} Suppose $r' > r$ and  $d  \geq 1$. Then, $2y_* \leq d \leq {\rm diam}(S) \leq 2 r$. In particular,
$ {\rm diam}(S) \leq c_* d$.
\end{thm}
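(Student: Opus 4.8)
The plan is to reduce the statement to one geometric inequality, $2y_*\le d$, followed by an elementary one-variable estimate. The two ``outer'' inequalities are immediate: since $f(q)$ is the farthest point of $q$ in $S$, we have $S\subseteq B_r(q)$ and hence ${\rm diam}(S)\le{\rm diam}(B_r(q))=2r$; and $d=d(f(q),f^2(q))\le{\rm diam}(S)$ because $f(q),f^2(q)\in S$. We are in the regime $\rho_*<r\le\tfrac{\sqrt3}{2}$: the lower bound is the standing assumption of the section (if $r\le\rho_*$, then $S\subseteq B_r(q)$ already gives ${\rm diam}(S)\le 2r\le 2\rho_*=c_*\le c_*d$ using $d\ge1$), and the upper bound holds because $f(q)\in S\subseteq E$ while every $z\in E$ satisfies $\|z\|\le\tfrac{\sqrt3}{2}$ (add the two ball inequalities defining $E$ and use $p'+f(p)=0$), so $r=\|f(q)\|\le\tfrac{\sqrt3}{2}$. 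Granting $2y_*\le d$, the remaining claim ${\rm diam}(S)\le c_*d$ will follow from $2r\le c_*d$, and assembling the pieces yields the stated chain $2y_*\le d\le{\rm diam}(S)\le 2r\le c_*d$.

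For the key step $2y_*\le d$ — the only place the hypothesis $r'>r$ enters — I would work on the unit circle $\partial B_1(p')$, which contains both $\overline q_*=(-x_*,y_*)$ and $q'_*$. Parametrize its points by the central angle $t$ at $p'$ measured from the positive $x$-direction; then $\overline q_*$ sits at angle $\psi$ with $\cos\psi=\tfrac54-r^2$, $\sin\psi=y_*$, and $f(p)=(\tfrac12,0)$ at angle $0$. Since $\|(\cos t-\tfrac12,\sin t)\|^2=\tfrac54-\cos t$, the point at angle $t$ lies in $B_r(q)$ exactly when $|t|\le\psi$; also $\psi\le\tfrac\pi3$, because $r\le\tfrac{\sqrt3}{2}$ forces $\cos\psi\ge\tfrac12$. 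Now $r'=d(q,q'_*)>r$ forces $q'_*\notin B_r(q)$, hence its angle lies outside $(-\psi,\psi)$, while $v\le 0$ forces its angle into $[-\pi,0]$; combining, $q'_*$ has central angle in $[-\pi,-\psi]$, and a brief check using $\psi\le\tfrac\pi3$ shows its angular distance to $\overline q_*$ (at angle $\psi$) is at least $2\psi$. Since the chord of the unit circle subtending angular distance $\theta$ has length $2\sin(\theta/2)$, which is increasing on $\theta\in[0,\pi]$, we conclude $d=d(\overline q_*,q'_*)\ge 2\sin\psi=2y_*$.

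To finish, we establish $2r\le c_*d$. As $d\ge 2y_*$, it suffices to prove $r\le c_*y_*$, i.e. (both sides nonnegative) $r^2\le(5-2\sqrt3)\bigl(-r^4+\tfrac52r^2-\tfrac9{16}\bigr)$. Put $a=r^2$ and $h(a)=(5-2\sqrt3)\bigl(-a^2+\tfrac52a-\tfrac9{16}\bigr)-a$, a downward-opening quadratic in $a$. One checks $h(\rho_*^2)=0$ (using $y_*=\tfrac12$ at $r=\rho_*$), and since the product of the roots of $h$ equals $\tfrac9{16}$, the second root is $\tfrac9{4(5-2\sqrt3)}=\tfrac{9}{20-8\sqrt3}>\tfrac34$; hence $h\ge 0$ on $[\rho_*^2,\tfrac34]$, which contains $a=r^2$. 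Thus $r\le c_*y_*$, so $2r\le 2c_*y_*\le c_*d$, and ${\rm diam}(S)\le 2r\le c_*d$, completing the proof.

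The main obstacle is the geometric step $2y_*\le d$: pinning down the arc of $\partial B_1(p')$ to which the auxiliary point $q'_*$ is confined once $r'>r$, and controlling its distance to $\overline q_*$ along that arc. A purely algebraic route also works — substitute $a=r^2$, $b=r'^2$ into the formula for $d^2$ derived for the previous theorem and verify $d^2\ge 4y_*^2$ on the region $\{\rho_*^2<a\le\tfrac34,\ a<b\le\tfrac94\}$ — but it is heavier, because the cross-term $\sqrt{(-a^2+\tfrac52a-\tfrac9{16})(-b^2+\tfrac52b-\tfrac9{16})}$ appearing in $d^2$ cannot simply be discarded.
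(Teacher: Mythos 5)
Your proof is correct and follows essentially the same route as the paper: the identical chain $2y_* \le d \le {\rm diam}(S) \le 2r$, with $r' > r$ pushing $q_*'$ away from $\overline q_*$ to give $d \ge 2y_*$, followed by the bound $r \le c_* y_*$ for $r \in (\rho_*, \sqrt{3}/2]$. The only difference is that you supply rigorous justifications for the two steps the paper merely asserts — the angle/chord argument on $\partial B_1(p')$ (including the check $r \le \sqrt{3}/2$ and $\psi \le \pi/3$) in place of the paper's one-line reflection remark, and the quadratic-roots computation in place of the paper's unproved claim that $r/y_*$ attains its maximum $c_*$ at $\rho_*$ — so this is a tightening of the same proof, not a different approach.
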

\begin{proof}  Since $r' >r$, $q_*'$ is outside of the ball of radius $r$ at $q$.  This means $d \geq 2 y_*$ (see definition of $y_*$,  (\ref{ystar})).  In other words $d$ is at least as long as the distance between $\overline q_*$ and its reflection with respect to the $x$-axis.  Now given the formula of $y_*$ in terms of $r$, (\ref{ystar}) we have
\begin{equation}
\frac{r}{y_*} = \frac{r}{-r^4+\frac{5}{2}r^2- \frac{9}{16}}.
\end{equation}
As a function of $r$ the maximum in the interval $[\rho_*, \sqrt{3}/2]$ is attained at $\rho_*$ and is $c_*$. It decreases to one at the other end point of the interval.
\end{proof}

We now prove the result when $d \leq 1$.

\begin{thm} Suppose $d  \leq 1$. Then,  ${\rm diam}(S) \leq c_*$.
\end{thm}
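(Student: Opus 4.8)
The plan is to reduce the remaining case $d\le 1$ to the case $d\ge 1$ just treated, by simply replacing the quantity $d$ with the value $1$. First I would dispose of the trivial sub-case $r=r_q\le\rho_*$: then $S\subseteq B_{r_q}(q)$ already gives ${\rm diam}(S)\le 2r_q\le 2\rho_*=c_*$. So from now on $\rho_*<r\le\sqrt{3}/2$, the upper bound holding because $f(q)\in S\subseteq E$ and adding the two defining inequalities of $E$ shows every point of $E$ has squared norm at most $3/4$, whence $r=d(q,f(q))\le\sqrt{3}/2$.

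The key observation is that $d=d(f(q),f^2(q))$ is exactly the farthest-point distance from $f(q)$ in $S$, so $S\subseteq B_d(f(q))$, and since $d\le 1$ this only weakens to $S\subseteq B_1(f(q))$. Combining with $S\subseteq E$ and $S\subseteq B_r(q)$ and using $B_d(f(q))\subseteq B_1(f(q))$,
\[
S\ \subseteq\ E\cap B_r(q)\cap B_d(f(q))\ \subseteq\ E\cap B_r(q)\cap B_1(f(q)),
\]
so ${\rm diam}(S)\le{\rm diam}\bigl(E\cap B_r(q)\cap B_1(f(q))\bigr)$. But the region on the right is precisely the one analysed in Lemmas \ref{lem1}--\ref{lem3} and the main Theorem above, with the parameter $d$ specialized to $1$; and $1\ge 1$, so the hypothesis ``$d\ge 1$'' of those statements holds. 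Moreover the case hypothesis ``$r'\le r$'' is automatic here: were $r'>r$ we would get $d\ge 2y_*$ (as in the proof of the second Theorem above), yet $2y_*>1=d$ on $(\rho_*,\sqrt{3}/2]$, since $y_*^2=-r^4+\tfrac{5}{2}r^2-\tfrac{9}{16}$ is increasing there (its derivative is $r(5-4r^2)>0$) and equals $\tfrac{1}{4}$ at $r=\rho_*$. Hence Lemmas \ref{lem1}, \ref{lem3} and the main Theorem, all read with $d=1$, yield
\[
{\rm diam}(S)\ \le\ {\rm diam}\bigl(E\cap B_r(q)\cap B_1(f(q))\bigr)\ =\ d'\ \le\ d_*\ \le\ c_*\cdot 1\ =\ c_*,
\]
which is the assertion.

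The only work beyond this substitution is a bookkeeping check I would carry out carefully: that the proofs of Lemmas \ref{lem1}--\ref{lem3} and of the main Theorem invoke the hypothesis $d\ge 1$ only through the fact that $p'$ and $f(p)$ are corner points of the feasible region, and that this persists at $d=1$. It does: $f(q)\in S\subseteq E=B_1(p')\cap B_1(f(p))$ gives $d(f(q),p')\le 1$ and $d(f(q),f(p))\le 1$, so $p',f(p)\in B_1(f(q))$, while $p',f(p)\in E$ is immediate and $d(q,p')=d(q,f(p))=\tfrac{1}{2}<\rho_*<r$ puts them in $B_r(q)$; thus $p'$ and $f(p)$ remain vertices of $E\cap B_r(q)\cap B_1(f(q))$ exactly as they were of the original region, and the earlier arguments apply verbatim. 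I expect this verification — confirming that nothing earlier secretly uses the strict inequality $d>1$ or the precise value of $d$ beyond $S\subseteq B_d(f(q))$ — to be the only genuine obstacle; everything else is a direct specialization of results already in hand.
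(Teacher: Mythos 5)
Your proposal is correct and follows essentially the same route as the paper: since $d\le 1$, one has $S\subseteq E\cap B_r(q)\cap B_d(f(q))\subseteq E\cap B_r(q)\cap B_1(f(q))$, and the earlier lemmas and theorems read with $d=1$ bound the diameter of that region by $c_*$. Your extra bookkeeping (disposing of $r\le\rho_*$ and checking that $r'\le r$ is automatic when $d$ is specialized to $1$) merely makes explicit details the paper leaves implicit in its one-line proof.
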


\begin{proof} Since $S \subset E \cap B_r(q) \cap B_d(f(q)) \subset E \cap B_r(q) \cap B_1(f(q))$ and we have already proved the diameter
of $E \cap B_r(q) \cap B_1(f(q))$ is bounded above by $c_*$, the proof of the Euclidean plane is complete.
\end{proof}

We end this analysis by exhibiting a case of five points where the worst-case bound of $c_*$ is achievable, see Figure \ref{figlast}.

    \begin{figure}[h]
	\centering
	\begin{tikzpicture}[scale=0.3]
		
\begin{scope}[red]
		 \clip  (4.0,0.0) circle (6);
		 \clip (-2.0,0.0) circle (6);
\clip (1.0,0.0) circle (4);
\fill[color=gray!39] (-10, 10) rectangle (10, -10);
\end{scope}

		\begin{scope}[black]
		\draw (4.0,0.0) circle (6);
		\draw (-2.0,0.0) circle (6);
\draw (1.0,0.0) circle (4);
		\end{scope}
		
		\filldraw (4,0) circle (2pt) node[right] {$f(p)$};
        \filldraw (-2,0) circle (2pt) node[left] {$p=p'$};

         \filldraw (1,0) circle (2pt) node[below] {$q$};
         \filldraw (2.85,3.53) circle (2pt) node[above] {$f(q)$};
         \filldraw (2.85,-3.53) circle (2pt) node[below] {$\overline q_*$};
         \filldraw (-.85,3.53) circle (2pt) node[above] {$q_*$};
         \draw (2.85,3.53) -- (2.85,-3.53) node[pos=0.4, right] {$d$};
         \draw (-.85,3.53) -- (2.85,-3.53) node[pos=0.3, above] {$d'=d_*$};
         \draw (1,0) -- (2.85,3.53) node[pos=0.5, right] {$r$};
         \draw (-2.0,0.0) -- (4,0);
		
	\end{tikzpicture}
	
\begin{center}
\caption{A case of five points gives worst-case error when $r=\rho_*$.} \label{figlast}
\end{center}
\end{figure}

\section{Iterative Algorithms }
In this section we propose two iterative algorithms for approximating the diameter of a finite set in any dimension $m \geq 2$.  The first algorithm is essentially identical with the algorithm in \cite{kal}.  The second is a randomized version of the algorithm.  We formally describe these in Algorithm 1 and Algorithm 2. They each have an input $t$ as the number of iterations, however we only give implementation of them for small $t$ because they produce high accuracy solution (within absolute error of $10^{-4}$ on tested data sets). We present experimental results with these algorithms.

\subsection{The iterative approximation algorithm}
In this section an iterative version of previous algorithm is presented. In each iteration given $p \in S$, we compute $f(p)$, $f^2(p)$, $q$, $f(q)$ and $f^2(q)$.   Computing $f(p)$ and $f^2(p)$ requires at most $2mn$ operations.  Computing $q$ requires $O(m)$ operations (see \ref{q}).  An additional $2mn$ operations are needed to compute $f(q)$ and $f^2(q)$. Thus when $m$ is much smaller than $n$ each iteration requires $4mn$ operations. In our computational results we have only run the algorithm $t=2$ times.
In Algorithm \ref{alg1}, the iterative algorithm is shown formally.

\begin{algorithm}[htb]                      
\caption{The iterative approximation algorithm}          
\small
\label{alg1}                           
\begin{algorithmic}[]                    
\STATE \textit{Input}:
     \STATE \hspace{\algorithmicindent}  $S$: a set of $n$ points in $\mathbb{R} ^m$
     \STATE \hspace{\algorithmicindent}  $t$: number of iterations

     \STATE \textit{Output}:
     \STATE \hspace{\algorithmicindent} An approximation value of {\rm diam}(S)
     \STATE $d_{max}=0$, $i=0$.
     \STATE choose an arbitrary point $p\in S$.
 \WHILE {$i<t$}
  \IF {$r_p=d(p,f(p))>d_{max}$}
  \STATE $d_{max}=d(p,f(p))$
  \ENDIF
  \IF {$r_{f(p)}=d(f(p),f^2(p))>d_{max}$}
  \STATE $d_{max}=d(f(p),f^2(p))$
  \ENDIF
  \STATE Let $q =\frac{\alpha}{2} p +(1- \frac{\alpha}{2}) f(p)$, $\alpha = \frac{r_{f(p)}}{r_p}$.
  \IF {$d(f(q),f^2(q))>d_{max}$}
\STATE  $d_{max}=d(f(q),f^2(q))$
  \ENDIF
   \STATE i=i+1
   \STATE $p=f^2(q)$
 \ENDWHILE
 \RETURN $d_{max}$
 \end{algorithmic}
\end{algorithm}

\subsection{Randomized and approximation Algorithm}

In this section, we present a randomized version of  Algorithm 1.
We begin from an arbitrary point $p$ and compute $f(p)$.  Next we compute the midpoint of $p$ and $f(p)$. Let $q$ be this midpoint. Then we compute
$f(q)$ and  $f^2(q)$. Thus the maximum of $d(p,f(p))$, $d(q, f(q))$, $d(f(q), f^2(q))$ is a lower bound to the diameter of $S$.
We iterate this algorithm. In the next step, we can either begin from $f(p)$ or $f^2(q)$. To do so, we randomly choose $f(p)$ or $f^2(q)$ with equal probability. This becomes our new point. Then we compute the farthest point from the chosen point and compare the estimate of diameter of previous step with the new one. In Algorithm~\ref{alg2} we explain the algorithm formally.  In each iteration of the algorithm we need about $3mn$ operations in contrast with $4mn$ operations in Algorithm 2.  In practice we ran this with $t=2,3$ and $5$. We have implemented this algorithm on some data sets. In the next section the experimental results of this algorithm are shown.


\begin{algorithm}[htb]                      
\caption{The iterative randomized algorithm}          
\label{alg2}                           
\small
\begin{algorithmic}[]                    
\STATE \textit{Input}:
     \STATE \hspace{\algorithmicindent}  $S$: a set of $n$ points in $\mathbb{R} ^m$
     \STATE \hspace{\algorithmicindent}  $t$: number of iterations

     \STATE \textit{Output}:
     \STATE \hspace{\algorithmicindent} An approximation value of {\rm diam}(S)
     \STATE $d_{max}=0$, $i=0$.
     \STATE choose an arbitrary point $p\in S$.
 \WHILE {$i<t$}
  \IF {$d(p,f(p))>d_{max}$}
  \STATE $d_{max}=d(p,f(p))$
  \ENDIF
  \STATE Let $q$ be the middle point on the line connecting $p$ and $f(p)$
  \IF {$d(f(q),f^2(q))>d_{max}$}
\STATE  $d_{max}=d(f(q),f^2(q))$
  \ENDIF
   \STATE i=i+1
   \STATE with probability $0.5$ let $p=f(p)$ and with probability $0.5$ let $p=f^2(q)$
 \ENDWHILE
 \RETURN $d_{max}$
 \end{algorithmic}
\end{algorithm}

	





	

\section{Experimental results}

To show the efficiency of the proposed algorithms in practice, we have implemented these and run them on some data sets.
The most comparable approaches to ours are the algorithms proposed in \cite{mal} and \cite{her}. We have used the package implemented by Malandain and Boissonnat's  in \cite{mal}. They have implemented their algorithms  and  we  have also implemented our algorithms and added  them to their package.
In their experiments they generated 2 types of data set:  Volume based distributions, in a cube, in a ball, and in sets of constant width (only in 2D); and Surface based distributions, on a sphere, and on ellipsoids. They also used  real inputs\footnote{ Large Geometric Models Archive,
http://www.cs.gatech.edu/projects/large models/, Georgia Institute of Technology.}. We have also used the same package to generate data sets and the same real inputs.
Malandain and Boissonnat's have implemented the following algorithms in \cite{mal}:
\begin{itemize}
\item Malandain and Boissonnat's  exact algorithm;
\item Malandain and Boissonnat's approximation algorithm;
\item Har-Peled's algorithm: implemented by Malandain and Boissonnat;
\item Hybrid1 algorithm: proposed by Malandain and Boissonnat which is combination of their algorithm and Har-Peled's algorithm;
\item Hybrid2 algorithm: another modification of the two algorithms presented by Malandain and Boissonnat's algorithm and Har-Peled's algorithm presented by Malandain and Boissonnat's.
\end{itemize}
We have generated the data sets and computed the diameter for each set using each of the above algorithms and our proposed algorithms.
The experimental results are shown in the tables.
The first 5 algorithms are implemented by Malandain and Boissonnat's and the next one is the implementation of Algorithm~\ref{alg1} with $t=2$ iterations. The next 3 are implementation of Algorithm~\ref{alg2} with $t=2, 3$ and $5$ iterations.


We make some observations in our computation.  Firstly, in all the data sets, the difference between the approximated value and exact value of diameter is less than $10^{-4}$ where ${\rm diam}(S)>1$ even with $t=2$ iterations for both algorithms.  From the tables it is seen that the running time of the randomized algorithm, Algorithm \ref{alg2} with $t=2$ iteration is better than all the other algorithms. The proposed algorithms are more efficient in higher dimensions.  One advantage of the proposed algorithms is that no extra memory is needed. Also, by virtue of their efficiency these algorithms can be implemented for big data sets.
Another advantage of these algorithms is that in higher dimensions, the running time of these algorithm is significantly better than the other algorithms (see Table \ref{hd}).
    \begin{table}[h]
\small
\caption{CPU times in millisecond for real data sets}
\begin{center}
\begin{tabular}{|c|c|c|c|c|}
\hline
Input&blade&dragon&hand&happy \\ \hline
Exact&70.9&8.06&17&12.38\\\hline
Approx&67&5.9&15&6.27\\ \hline
Har-Peled&84&84.6&36.8&76\\\hline
Hybrid1&112&102&13&64.2\\\hline
Hybrid2&42&79&13&54\\\hline
Algorithm \ref{alg1}&25.4&12.2&8.31&14.3\\\hline
Algorithm \ref{alg2}-5&34&16.9&14.2&20.5\\\hline
Algorithm \ref{alg2}-3&20.4&10.2&7.94&12.3\\\hline
Algorithm \ref{alg2}-2&1.35&6.9&5.3&8.2\\\hline
\end{tabular}
\end{center}
\label{sphereellipsoid}
\end{table}

\begin{table}[h]
\small
\caption{CPU time in milliseconds 3D Volume and Surface Based distributions
￼}
\begin{center}
\begin{tabular}{|c|c|c|c|c|c|c|}
\hline
Input&Cube&Cube&Cube&Ball&Ball&Ball\\
Points&10,000& 100,000&1,000,000&10,000&100,000&200,000 \\ \hline
Exact&1.2&6&72&0.23&1.88&7.43\\\hline
Approx&0.3&0.7&41&0.2&1.97&3.9\\ \hline
Har-Peled&2&16&322&0.39&10&41.70\\\hline
Hybrid1&2&17&183&1.5&16.9&63.11\\\hline
Hybrid2&1.94&16&163&1.4&27.4&62.11\\\hline
Algorithm \ref{alg1}&0.25&3.1&30.9&0.32&2.68&5.3\\\hline
Algorithm \ref{alg2}-5&0.34&3.6&38&0.52&3.981&7.574\\\hline
Algorithm \ref{alg2}-3&0.21&2.31&26.32&0.214&2.19&4.557\\\hline
Algorithm \ref{alg2}-2&0.18&1.49&17.48&0.148&1.48&3.08\\\hline
  \hline
Input&Sphere&Sphere&Sphere&Ellipsoid&Ellipsoid rotated &Ellipsoid regular\\
Points&10,000& 100,000&200,000&1,000,000&1,000,000&1,000,000 \\ \hline
Exact&0.29&3.03&5.9&100.28&52.81&51.86\\\hline
Approx&0.27&2.8&5.7&106.08&51.92&52.53\\ \hline
Har-Peled&122.71&7.96&33.27&284.69&103.88&113.8\\\hline
Hybrid1&1.29&0.75&3.22&95.19&66.65&66.93\\\hline
Hybrid2&1.43&0.71&2.94&0.114&65.1&68.75\\\hline
Algorithm \ref{alg1}&0.27&2.6&6.1&23.5&26.84&27.8\\\hline
Algorithm \ref{alg2}-5&0.34&3.59&7.69&44.3&38.65&38.89\\\hline
Algorithm \ref{alg2}-3&0.23&2.17&4.56&25.89&23.91&23.39\\\hline
Algorithm \ref{alg2}-2&0.14&1.53&3.66&15.23&15.77&15.44\\\hline
\end{tabular}
\end{center}
\label{cubeball}
\end{table}

\begin{table}[h]

\caption{CPU times in millisecond for synthetic distributions in higher dimensions.}
\begin{center}
\begin{tabular}{|c|c|c|c|c|}

\hline
  \multicolumn{5}{|c|}{Cube, n=100,000} \\
  \hline
Input&d=6&d=9&d=12&d=15 \\ \hline
Exact&73.24&235.57&1520&27520\\\hline
Approx&64.63&162.47&3210&16990\\ \hline
Har-Peled&207.52&24020&119890&138240\\\hline
Hybrid1&87.23&180.31&17960&15450\\\hline
Hybrid2&87.59&171.43&11800&12230\\\hline
Algorithm \ref{alg1}&5.3&7.2&8.5&10.9\\\hline
Algorithm \ref{alg2}-5&7.33&9.62&12.14&14.55\\\hline
Algorithm \ref{alg2}-3&4.41&5.77&7.34&8.99\\\hline
Algorithm \ref{alg2}-2&2.99&3.89&4.85&5.9\\\hline
\hline

\hline
  \multicolumn{5}{|c|}{Ball, n=100,000} \\
  \hline
Input&d=6&d=9&d=12&d=15 \\ \hline
Exact&68770&111590&178430&264660\\\hline
Approx&18550&48600&105200&168390\\ \hline
Har-Peled&86680&102800&120200&137530\\\hline
Hybrid1&93940&136260&73620&222960\\\hline
Hybrid2&71350&100030&146520&218510\\\hline
Algorithm \ref{alg1}&5.5&6.42&8.12&9.46\\\hline
Algorithm \ref{alg2}-5&7.25&9.58&11.98&14.35\\\hline
Algorithm \ref{alg2}-3&4.38&5.87&7.29&8.64\\\hline
Algorithm \ref{alg2}-2&2.89&3.89&4.82&5.79\\\hline

\hline
  \multicolumn{5}{|c|}{Ellipse, n=100,000} \\
  \hline
Input&d=6&d=9&d=12&d=15 \\ \hline
Exact&58&1110&770&2760\\\hline
Approx&31.9&504&589&1300\\ \hline
Har-Peled&3040&109760&127280&134070\\\hline
Hybrid1&90&2280&3240&5580\\\hline
Hybrid2&98&2020&3900&5200\\\hline
Algorithm \ref{alg1}&5.21&6.9&7.28&9.5\\\hline
Algorithm \ref{alg2}-5&7.3&9.5&12.1&16.5\\\hline
Algorithm \ref{alg2}-3&4.39&5.79&7.2&8.6\\\hline
Algorithm \ref{alg2}-2&2.988&4.2&4.9&5.8\\\hline
\hline
  \multicolumn{5}{|c|}{Regular Ellipse, n=100,000} \\
  \hline
Input&d=6&d=9&d=12&d=15 \\ \hline
Exact&24&350&150&2280\\\hline
Approx&31.7&285&110.5&1440\\ \hline
Har-Peled&530&62010&123720&133850\\\hline
Hybrid1&71&308&3430&4070\\\hline
Hybrid2&72.9&268&4590&119.5\\\hline
Algorithm \ref{alg1}&5.06&6.84&8.22&10.7\\\hline
Algorithm \ref{alg2}-5&7.3&11.2&12.2&17.8\\\hline
Algorithm \ref{alg2}-3&4.39&5.95&7.3&8.65\\\hline
Algorithm \ref{alg2}-2&2.952&4.05&4.8&5.8\\\hline

\end{tabular}
\end{center}
\label{hd}
\end{table}


\section{Conclusion}
In this paper, we studied computing the diameter of a point set in any dimension which is a significant problem in computational geometry. We presented a fast constant approximation factor algorithm, giving a worst-case bound of about $1.24$ in dimension $m=2$.  We believe that the same bound applies to any dimension. Its verification is the subject of future work. We proposed two iterative algorithms, one a randomized iterative algorithm.  We also implemented these algorithms and compared the running times with related works. Based on experimental results the algorithms are very efficient.  Deriving worst-case bounds for iterative algorithms in terms of the number of iterations $t$ remains as open problem.

\clearpage

\newpage

\end{document}